\newtheorem{theorem}{Theorem}
\newtheorem{lemma}{Lemma}
\newtheorem{proposition}{Proposition}
\newtheorem{definition}{Definition}
\DeclareMathOperator{\Id}{Id}
\DeclareMathOperator{\enab}{enab}
\DeclareMathOperator{\dom}{dom}
\DeclareMathOperator{\cp}{ConPre}
\begin{document}

\title{\LARGE \bf Logarithmic Quantization based Symbolic Abstractions for Nonlinear Control Systems
\thanks{This work was supported by the H2020 ERC Starting Grant BUCOPHSYS, the EU H2020 Co4Robots Project, the Swedish Foundation for Strategic Research (SSF), the Swedish Research Council (VR) and the Knut och Alice Wallenberg Foundation (KAW).}
}

\author{Wei~Ren, and Dimos V. Dimarogonas 
\thanks{W. Ren and D. Dimarogonas are with Division of Decision and Control Systems, EECS, KTH Royal Institute of Technology, SE-10044, Stockholm, Sweden.
Email: \texttt{\small weire@kth.se}, \texttt{\small dimos@kth.se}.}
}

\maketitle

\begin{abstract}
This paper studies symbolic abstractions for nonlinear control systems using logarithmic quantization. With a logarithmic quantizer, we approximate the state and input sets, and then construct a novel discrete abstraction for nonlinear control systems. A feedback refinement relation between the constructed discrete abstraction and the original system is established. Using the constructed discrete abstraction, the safety controller synthesis problem is studied. With the discrete abstraction and the abstract specification, the existence of a safety controller is investigated, and the algorithm is proposed to compute the abstract controller. Finally, a numerical example is given to illustrate the obtained results.
\end{abstract}

\section{Introduction}
\label{sec-intro}

The use of discrete abstractions \cite{Milner1989communication, Tabuada2006linear} has gradually become a standard approach for the design of hybrid systems due to the following two main advantages. First, thanks to discrete abstractions of continuous dynamics, one can deal with controller synthesis problems efficiently via techniques developed in the fields of supervisory control \cite{Ramadge1987supervisory} or algorithmic game theory \cite{Ramadge1987modular}. Second, with an inclusion or equivalence relationship between the original system and the discrete abstraction, the synthesized controller is guaranteed to be correct by design, and thus formal verification is either not needed or can be reduced \cite{Girard2012controller}. To construct the discrete abstraction, the key is to find an equivalence relation on the state space of dynamic systems. Such an equivalence relation leads to a new system, which is on the quotient space and shares the interested properties with the original system.

In the literature on the construction of the discrete abstraction, the most commonly-used approach is based on (alternating) (bi-)simulation relations and their approximate variants in \cite{Pola2008approximately, Girard2010approximately}. The simulation relation and related concepts capture equivalences of dynamic systems in an exact or approximate setting. However, this type of relations results in the requirement of exact information of the original system to obtain the refined controller, and a huge computational complexity for the abstract controller due to its abstraction refinement. As a result, the feedback refinement relation was proposed in \cite{Reissig2014feedback}, and provides an alternative to connect the discrete abstraction and the original system. With a feedback refinement relation, the abstract controller can be connected to the original system via a static quantizer \cite{Reissig2017feedback}. Some salient results can be found; see \cite{Khaled2016symbolic, Meyer2018compositional}.

On the other hand, due to time-invariant quantization regions and the resulting simple structures \cite{Ren2018quantized}, a static quantizer is applied in the construction of discrete abstractions \cite{Pola2008approximately, Girard2010approximately}. The uniform quantizer, which is a static quantizer with uniform time or space partitions \cite{Delchamps1990stabilizing}, is commonly used in approximations of both the state and input sets. Since the uniform quantization partitions the state set with equal distance, a huge computational complexity may be needed to compute the discrete abstraction \cite{Pola2008approximately, Girard2010approximately}. To reduce the computational complexity, a coarse quantizer \cite{Fu2005sector, Coutinho2010input, Elia2001stabilization} can be instead applied such that the state or input space can be partitioned with different distance, and this is the main motivation of this paper. Using the coarse quantizers like logarithmic quantizer and hysteresis quantizer, the approximate bisimulation is not valid any more, and thus the feedback refinement relation is applied.

In this paper, we study the discrete abstraction and controller synthesis of nonlinear control systems via logarithmic quantization. Using the logarithmic quantizer of \cite{Fu2005sector}, which is a coarse quantizer, the state and input sets are approximated, and then a novel discrete abstraction is constructed. According to the constructed discrete abstraction, the safety controller synthesis is studied via abstract specification, which is obtained via the logarithmic quantizer. A numerical example is given to demonstrate the obtained results. The main contributions of this paper are three-fold. To begin with, logarithmic quantization based discrete abstraction is first proposed, which provides an alternative approach to approximate the state and input sets. In addition, since the logarithmic quantization is coarser than the uniform quantization, the computational complexity of the obtained discrete abstraction is reduced greatly. Second, using the feedback refinement relation proposed in \cite{Reissig2017feedback}, abstract specification is constructed via logarithmic quantization, and further used in controller synthesis. Third, using the obtained abstraction and the abstract specification, the safety controller synthesis is investigated for the original system, and an algorithm is proposed to construct the safety controller.

\section{Nonlinear Control Systems}
\label{sec-nonlinear}

\subsection{Notations}

We denote $\mathbb{R}:=(-\infty, +\infty)$; $\mathbb{R}^{+}_{0}:=[0, +\infty)$; $\mathbb{R}^{+}:=(0, +\infty)$; $\mathbb{N}:=\{0, 1, \ldots\}$; $\mathbb{N}^{+}:=\{1, 2, \ldots\}$. $\|\cdot\|$ represents the infinite vector norm. Given $a, b\in\mathbb{R}\cup\{\pm\infty\}$ with $a\leq b$, we denote by $[a, b]$ a closed interval. Given $a, b\in(\mathbb{R}\cup\{\pm\infty\})^{n}$, we define the relations $<, >, \leq, \geq$ on $a, b$ component-wise. Given $x\in\mathbb{R}^{n}$, $x_{i}$ denotes the $i$-th element of $x$, $|x_{i}|$ denotes the absolute value of $x_{i}$, and $|x|=(|x_{i}|, \ldots, |x_{n}|)$. A cell $\llbracket a, b\rrbracket$ is the closed set $\{x\in\mathbb{R}^{n}|a_{i}\leq x_{i}\leq b_{i}\}$. Given two sets $A, B\subset\mathbb{R}^{n}$ with $A\subseteq B$, denote by $\Id_{A}: A\hookrightarrow B$ the natural inclusion map from $a\in A$ to $\Id(a)=a\in B$. A relation $\mathcal{R}\subset A\times B$ with the map $\mathcal{R}: A\rightarrow2^{B}$ defined by $b\in\mathcal{R}(a)$ if and only if $(a, b)\in\mathcal{R}$. $\mathcal{R}^{-1}$ denotes the inverse relation of $\mathcal{R}$, i.e. $\mathcal{R}^{-1}:=\{(b, a)\in B\times A: (a, b)\in\mathcal{R}\}$. Given a set $A$, $A^{[0, t)}$ denotes the set of all the signals, which take values in $A$ and are defined on intervals of the form $[0, t)$; $A^{\infty}=\bigcup_{t\in\mathbb{N}}A^{[0, t)}$.

\subsection{Nonlinear Control Systems}

The class of nonlinear control systems considered in this paper is introduced in the following definition.

\begin{definition}[\cite{Pola2008approximately}]
\label{def-1}
A \textit{control system} $\Sigma$ is a quadruple $\Sigma=(\mathbb{R}^{n}, U, \mathcal{U}, f)$, where,
\begin{itemize}
\item $\mathbb{R}^{n}$ is the state set;
\item $U\subseteq\mathbb{R}^{m}$ is the input set;
\item $\mathcal{U}$ is a subset of all piecewise continuous functions from the interval $(a, b)\subset\mathbb{R}$ to $U$, with $a<0, b>0$;
\item $f: \mathbb{R}^{n}\times U\rightarrow\mathbb{R}^{n}$ is a continuous map satisfying the following Lipschitz assumption: there exists a constant $L\in\mathbb{R}^{+}$ such that for all $x, y\in\mathbb{R}^{n}$ and all $u\in U$, we have $\|f(x, u)-f(y, u)\|\leq L\|x-y\|$.
\end{itemize}
\end{definition}

A curve $\xi: (a, b)\rightarrow\mathbb{R}^{n}$ is said to be a \textit{trajectory} of $\Sigma$, if there exists $u\in U$ such that $\dot{\xi}(t)=f(\xi(t), u(t))$ for almost all $t\in(a, b)$. Different from the trajectory defined above over the open domain, we refer to the trajectory $\mathbf{x}: [0, \tau]\rightarrow\mathbb{R}^{n}$ defined on a closed domain $[0, \tau]$ with $\tau\in\mathbb{R}^{+}$ such that $\mathbf{x}=\xi|_{[0, \tau]}$. Denote by $\mathbf{x}(t, x, u)$ the point reached at time $t\in(a, b)$ under the input $u$ from the initial condition $x$. Such a point is uniquely determined, since the assumptions on $f$ ensure the existence and uniqueness of the trajectory.

\section{Feedback Refinement Relation}
\label{sec-feedback}

In this section, we introduce the notion of feedback refinement relation upon which the following results rely. To begin with, the class of transition systems is introduced.

\begin{definition}[\cite{Girard2012controller}]
\label{def-2}
A \textit{transition system} is a sextuple $T=(X, X^{0}, U, \Delta, Y, H)$, comprising of: (i) a set of states $X$; (ii) a set of initial states $X^{0}$; (iii) a set of inputs $U$; (iv) a transition relation $\Delta: X\times U\times X$; (v) an output set $Y$; (vi) an output map $H : X\rightarrow Y$.
\end{definition}

The transition $(x, u, x')\in\Delta$ is denoted by $x'\in\Delta(x, u)$, which means that the system can evolve from $x$ to $x'$ under the input $u$. An input $u\in U$ is said to belong to \textit{the set of enabled inputs} at $x\in X$, denoted by $\enab(x)$, if $\Delta(x, u)\neq\varnothing$. If $\enab(x)=\varnothing$, then $x\in X$ is said to be \textit{blocking}; otherwise, it is said to be \textit{non-blocking}. If all the states are non-blocking, the system $T$ is called to be \textit{non-blocking}.

Similar to approximate simulation relations and their variants in \cite{Pola2008approximately, Girard2012controller}, a feedback refinement relation between two transition systems $T_{1}$ and $T_{2}$ is introduce as follows.

\begin{definition}[\cite{Reissig2017feedback}]
\label{def-3}
Let $T_{i}=(X_{i}, X^{0}_{i}, U_{i}, \Delta_{i}, Y_{i}, H_{i})$ be two transition systems with $i\in\{1, 2\}$, and assume that $U_{2}\subseteq U_{1}$. A relation $\mathcal{F}\subseteq X_{1}\times X_{2}$ is a \textit{feedback refinement relation} from $T_{1}$ to $T_{2}$, if for all $(x_{1}, x_{2})\in\mathcal{F}$, (i) $U_{2}(x_{2})\subseteq U_{1}(x_{1})$; (ii) $u\in U_{2}(x_{2}), x'_{1}=\Delta_{1}(x_{1}, u)\Rightarrow\mathcal{F}(x'_{1})\subseteq\Delta_{2}(x_{2}, u)$, where $U_{i}(x):=\{u\in U_{i}: \Delta_{i}(x, u)\neq\varnothing\}$. Denote $T_{1}\preceq_{\mathcal{F}}T_{2}$ if $\mathcal{F}$ is a feedback refinement relation from $T_{1}$ to $T_{2}$.
\end{definition}

\section{Symbolic Model}
\label{sec-symbolic}

In this section, we work with the time-discretization of the control system $\Sigma$. Assume the sampling period is $\tau>0$, which is a design parameter. We define the time-discretization of the control system $\Sigma$ as the transition system $T_{\tau}(\Sigma):=(X_{1}, X^{0}_{1}, U_{1}, \Delta_{1}, Y_{1}, H_{1})$, where,
\begin{itemize}
\item the set of states is $X_{1}:=\mathbb{R}^{n}$;
\item the set of initial states is $X^{0}_{1}:=\mathbb{R}^{n}$;
\item the set of inputs is $U_{1}:=\{u\in\mathcal{U}|\mathbf{x}(t, x, u) \text{ is defined for all } x\in\mathbb{R}^{n}\}$;
\item the transition relation is given as follows: for $x\in X_{1}$ and $u\in U_{1}$, $x'=\Delta_{1}(x, u)$ if and only if $x'=\mathbf{x}(\tau, x, u)$;
\item the set of outputs is $Y_{1}:=\mathbb{R}^{n}$;
\item the output map is $H: X_{1}\hookrightarrow X_{1}$.
\end{itemize}

\subsection{Logarithmic Quantization based Approximation}
\label{subsec-quantizer}

To construct a discrete abstraction of a control system, the state and input sets need to be approximated first. To reduce the computational complexity, the following logarithmic quantizer is applied, which provides an alternative for the approximation of the state and input sets.

\begin{definition}[\cite{Fu2005sector, Coutinho2010input, Elia2001stabilization}]
\label{def-4}
A quantizer is called a \textit{logarithmic quantizer}, if it has the following form
\begin{align}
\label{eqn-2}
Q(z):=\left\{\begin{aligned}
&z_{i},  & & (1+\eta)^{-1}z_{i}<z\leq(1-\eta)^{-1}z_{i}; \\
&0, & & 0\leq z\leq(1+\eta)^{-1}d;  \\
& -Q(-z), & & z<0, \\
\end{aligned}\right.
\end{align}
where $z_{i}=\rho^{(1-i)}d$, $\rho=\frac{1-\eta}{1+\eta}$, $\eta\in(0, 1)$, $d>0$, and $i\in\mathbb{N}^{+}$.
\end{definition}

In Definition \ref{def-4}, the parameter $\rho\in(0, 1)$ is called the quantization density and $z_{\min}:=(1+\eta)^{-1}d$ is the size of the deadzone. For a quantized measurement $z_{i}>0$, the quantization region is $\hat{z}_{i}:=((1+\eta)^{-1}z_{i}, (1-\eta)^{-1}z_{i}]$. The quantization error $z-Q(z)$ can be written as (see \cite{Fu2009finite})
\begin{equation}
\label{eqn-3}
z-Q(z):=\Lambda(z)z, \quad \Lambda(z)\in[-\eta, \eta].
\end{equation}

Using the logarithmic quantizer \eqref{eqn-2}, the state set $\mathbb{R}^{n}$ is approximated by the sequence of embedded lattices $[\mathbb{R}^{n}]_{\eta}$:
\begin{align*}
[\mathbb{R}^{n}]_{\eta}&:=\left\{q\in\mathbb{R}^{n}: q_{i}=\pm\frac{\rho^{(1-k_{i})}d}{\sqrt{n}}, k_{i}\in\mathbb{N}^{+}, \right. \\
&\quad i\in\{1, \ldots, n\}\}\cup\{0\},
\end{align*}
where, $\rho=(1+\eta)^{-1}(1-\eta)$, $\eta\in(0, 1)$ is treated as a state space parameter, and $d>0$ is a fixed constant. We associate a quantizer $Q_{\eta}: \mathbb{R}^{n}\rightarrow[\mathbb{R}^{n}]_{\eta}$ such that $Q_{\eta}(x)=Q(x)$ if and only if for $x=(x_{1}, \ldots, x_{n})\in\mathbb{R}^{n}$ and $i\in\{1, \ldots, n\}$,
\begin{equation*}
(\sqrt{n}(1+\eta))^{-1}|q_{i}|\leq|x_{i}|\leq(\sqrt{n}(1-\eta))^{-1}|q_{i}|,
\end{equation*}
or
\begin{equation*}
-(\sqrt{n}(1+\eta))^{-1}d\leq x_{i}\leq(\sqrt{n}(1+\eta))^{-1}d.
\end{equation*}
As a result, from \eqref{eqn-3} and simple geometrical considerations, $\|x-Q_{\eta}(x)\|\leq\Lambda(x)\|x\|$ holds for all $x\in\mathbb{R}^{n}$, where $\Lambda(x)\in[-\eta, \eta]$. With the quantizer $Q_{\eta}$, the state set is partitioned as
\begin{align*}
\hat{X}&:=\bigcup_{q\in[\mathbb{R}^{n}]_{\eta}}\hat{q},
\end{align*}
where $\hat{q}$ is the quantization region corresponding to the quantized measurement $q\in[\mathbb{R}^{n}]_{\eta}$.

In the following, the approximation of the input set $U_{1}$ of $T_{\tau}(\Sigma)$ is presented; see also \cite{Pola2008approximately} for a similar mechanism. We approximate $U_{1}$ by means of the set:
\begin{equation}
\label{eqn-4}
U_{2}:=\bigcup_{q\in[\mathbb{R}^{n}]_{\eta}}U_{2}(\hat{q}),
\end{equation}
where $U_{2}(\hat{q})$ captures the set of inputs that can be applied at the symbolic state $\hat{q}\in\hat{X}$. $U_{2}(\hat{q})$ is defined based on the reachable sets. Starting from a state $q\in[\mathbb{R}^{n}]_{\eta}$ (thus $q\in X_{1}$), the set of reachable states of $T_{\tau}(\Sigma)$ is obtained below.
\begin{equation*}
\mathfrak{R}(\tau, q):=\{x'\in X_{1}: \mathbf{x}(\tau, q, u)=x', u\in U_{1}\},
\end{equation*}
which is well-defined from the definition of the input set $U_{1}$.

The reachable set $\mathfrak{R}(\tau, q)$ is approximated as follows. Given any $\mu\in\mathbb{R}^{+}$, consider the following set
\begin{equation*}
\mathcal{Z}_{\mu}(\tau, q):=\{y\in[\mathbb{R}^{n}]_{\mu}: \exists z\in\mathfrak{R}(\tau, q) \text{ s.t. } y=Q_{\mu}(z)\}.
\end{equation*}
Here, $\mu$ is a design parameter, whose choice is not related to $\eta$. Define the function $\phi: \mathcal{Z}_{\mu}(\tau, q)\rightarrow U_{1}$, which means that, for any $y\in\mathcal{Z}_{\mu}(\tau, q)$, there exists an input $u_{1}=\phi(y)\in U_{1}$ such that $y=Q_{\mu}(\mathbf{x}(\tau, q, u_{1}))$. Note that the function $\phi$ is not unique. Thus, the set $U_{2}(\hat{q})$ in \eqref{eqn-4} can be defined by $U_{2}(\hat{q}):=\phi(\mathcal{Z}_{\mu}(\tau, q))$. Since the set $U_{2}(\hat{q})$ is the image through the map $\phi$ of a countable set, we have that $U_{2}(\hat{q})$ is countable, which implies that $U_{2}$ as defined in \eqref{eqn-4} is countable. As a result, the set $U_{2}$ approximates the set $U_{1}$ in the following way: given any $q\in[\mathbb{R}^{n}]_{\eta}$, for any $u_{1}\in U_{1}$, there exists $u_{2}\in U_{2}(\hat{q})$ such that $Q_{\mu}(\mathbf{x}(\tau, q, u_{1}))=Q_{\mu}(\mathbf{x}(\tau, q, u_{2}))$. That is, $\mathbf{x}(\tau, q, u_{1})$ and $\mathbf{x}(\tau, q, u_{2})$ are in the same quantization region.

In contrast to the uniform quantization of the state and input sets as in \cite{Pola2008approximately, Girard2012controller}, the logarithmic partition proposed here significantly reduces the computation complexity of the developed abstraction; see Section \ref{sec-example}.

\subsection{Symbolic Abstraction}
\label{subsec-symbolic}

With the partitions of the state and input sets, the symbolic abstraction of the system $T_{\tau}(\Sigma)$ is described in this subsection. The developed symbolic abstraction is a transition system $T_{\tau, \eta, \mu}(\Sigma)=(X_{2}, X^{0}_{2}, U_{2}, \Delta_{2}, Y_{2}, H_{2})$, where,
\begin{itemize}
\item the set of states is $X_{2}=\hat{X}$;
\item the set of initial states is $X^{0}_{2}=\hat{X}$;
\item the set of inputs is $U_{2}=\bigcup_{q\in [\mathbb{R}^{n}]_{\eta}}U_{2}(q)$;
\item the transition relation is given as follows: for $\hat{q}_{1}, \hat{q}_{2}\in X_{2}$ and $u\in U_{2}$, $\hat{q}_{2}\in\Delta_{2}(\hat{q}_{1}, u)$ if and only if
\begin{align}
\label{eqn-5}
\hat{q}_{2}&\cap\left(\mathbf{x}(\tau, q_{1}, u)+\llbracket-\theta e^{L\tau}\bar{q}_{1}, \theta e^{L\tau}\bar{q}_{1}\rrbracket\right)\neq\varnothing,
\end{align}
where $\theta:=\eta(1-\eta)^{-1}$,  $\bar{q}_{1}:=|q_{1}|+E_{q_{1}}$, $E_{q_{1}}\in\mathbb{R}^{n}$ is a vector whose the components are 1 if the corresponding components of $q_{1}$ are 0; and zero otherwise, and $L>0$ is the Lipschitz constant of the function $f$;
\item the set of outputs is $Y_{2}=\mathbb{R}^{n}$;
\item the output map is $H_{2}: X_{2}\hookrightarrow X_{2}$.
\end{itemize}

In the construction of the abstraction $T_{\tau, \eta, \mu}(\Sigma)$, the technique applied in \eqref{eqn-5} is similar to those in \cite{Meyer2018compositional, Reissig2011computing, Reissig2017feedback}, where the overapproximation of successors of states is applied. $\theta e^{L\tau}\bar{q}_{1}$ in \eqref{eqn-5} plays the same role as the growth bound in \cite{Reissig2017feedback}. Since the logarithmic quantizer is implemented here, the components of $X_{2}$ are the quantization region related to the quantized measurements. Hence, the developed abstraction extends those in previous works \cite{Pola2008approximately, Reissig2017feedback}, and provides an alternative for the abstraction construction. On the other hand, due to the logarithmic quantizer, the quantization errors are not bounded. Hence, the abstraction $T_{\tau, \eta, \mu}(\Sigma)$ and the system $T_{\tau}(\Sigma)$ do not satisfy the approximate bisimulation relation; see \cite{Pola2008approximately}. To deal with this issue, a feedback refinement relation is applied to connect $T_{\tau}(\Sigma)$ with $T_{\tau, \eta, \mu}(\Sigma)$.

\begin{theorem}
\label{thm-1}
Consider the control system $\Sigma$ with the time and state space sampling parameters $\tau, \eta, \mu\in\mathbb{R}^{+}$. Let  the map $\mathcal{F}: X_{1}\rightarrow X_{2}$ be given by $\mathcal{F}(x)=\hat{q}$ if and only if $x\in\hat{q}$. Then $T_{\tau}(\Sigma)\preceq_{\mathcal{F}}T_{\tau, \eta, \mu}(\Sigma)$.
\end{theorem}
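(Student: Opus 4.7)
The plan is to verify the two clauses of Definition~\ref{def-3} directly for the mapping $\mathcal{F}$. Observe first that the quantization regions $\{\hat{q}\}_{q\in[\mathbb{R}^{n}]_{\eta}}$ partition $\mathbb{R}^{n}=X_{1}$, so $\mathcal{F}$ is in fact single-valued; clause (ii) will therefore reduce to exhibiting a single symbol membership.

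Clause (i) is essentially by construction: by \eqref{eqn-4}, every $u\in U_{2}(\hat{q}_{1})$ is obtained as $\phi(y)$ for some $y\in\mathcal{Z}_{\mu}(\tau,q_{1})$, so $u\in U_{1}$. Since $U_{1}$ only contains inputs for which $\mathbf{x}(\tau,x,u)$ is well-defined at every $x\in\mathbb{R}^{n}$, the enabled set $U_{1}(x_{1})$ coincides with $U_{1}$, and therefore $U_{2}(\hat{q}_{1})\subseteq U_{1}=U_{1}(x_{1})$.

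Clause (ii) is the main work. Given $(x_{1},\hat{q}_{1})\in\mathcal{F}$ with associated quantized value $q_{1}$, fix $u\in U_{2}(\hat{q}_{1})$, set $x_{1}'=\mathbf{x}(\tau,x_{1},u)$ and $\hat{q}_{2}=\mathcal{F}(x_{1}')$. Since $x_{1}'\in\hat{q}_{2}$ automatically, the transition rule \eqref{eqn-5} yields $\hat{q}_{2}\in\Delta_{2}(\hat{q}_{1},u)$ as soon as I can verify that
\[
x_{1}'\in\mathbf{x}(\tau,q_{1},u)+\llbracket -\theta e^{L\tau}\bar{q}_{1},\theta e^{L\tau}\bar{q}_{1}\rrbracket.
\]
I would establish this by combining (a) the Lipschitz assumption on $f$ with Gronwall's inequality, applied to the pair of trajectories starting from $x_{1}$ and $q_{1}$ and driven by the common input $u$, yielding $\|\mathbf{x}(\tau,x_{1},u)-\mathbf{x}(\tau,q_{1},u)\|\leq e^{L\tau}\|x_{1}-q_{1}\|$; with (b) the logarithmic quantization identity \eqref{eqn-3} read coordinate-wise, which gives $|x_{1,i}-q_{1,i}|\leq\theta|q_{1,i}|$ whenever $q_{1,i}\neq 0$, together with a deadzone bound when $q_{1,i}=0$ that the slack $E_{q_{1}}$ in $\bar{q}_{1}$ is tailored to absorb. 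Propagating $|x_{1,i}-q_{1,i}|\leq\theta\bar{q}_{1,i}$ through the Gronwall estimate componentwise then produces $|x_{1,i}'-\mathbf{x}(\tau,q_{1},u)_{i}|\leq\theta e^{L\tau}\bar{q}_{1,i}$ for every $i$, which is exactly the cell condition and completes clause (ii).

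The main obstacle I expect is the coordinate-wise accounting: \eqref{eqn-3} is an infinity-norm inequality while \eqref{eqn-5} is a component-wise object, so the logarithmic slack has to be tracked separately in each direction, and the propagation through $e^{L\tau}$ must be read as a componentwise growth bound rather than a single norm inequality. The deadzone coordinates in particular only admit an additive bound (a purely multiplicative estimate would collapse to zero near the origin), and the vector $\bar{q}_{1}=|q_{1}|+E_{q_{1}}$ is chosen precisely so that both regimes fit under a single uniform envelope; once this reconciliation is carried out cleanly the remaining manipulations are routine.
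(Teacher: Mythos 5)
Your overall route is the same as the paper's: verify the two clauses of Definition~\ref{def-3} for the cell map $\mathcal{F}$, use the sector bound of the logarithmic quantizer to control $x_{1}-q_{1}$, and push that error through the Lipschitz/Gronwall factor $e^{L\tau}$ to land in the inflated set of \eqref{eqn-5}. Your clause (i) is fine and is in substance the paper's argument (the paper phrases it as a contradiction to show $\Delta_{1}(x_{1},u)\neq\varnothing$; your remark that $U_{1}(x_{1})=U_{1}$ by the definition of $U_{1}$ makes the same point).

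The problem is the step you flag as the main obstacle and then declare routine. The Lipschitz assumption in Definition~\ref{def-1} is a bound in the (infinity) norm, so Gronwall only yields $\|\mathbf{x}(\tau,x_{1},u)-\mathbf{x}(\tau,q_{1},u)\|\leq e^{L\tau}\|x_{1}-q_{1}\|$; there is no componentwise version of this estimate, because the dynamics couples coordinates: the error in coordinate $i$ at time $\tau$ is driven by the initial errors in \emph{all} coordinates and is only controlled by $e^{L\tau}\max_{j}|x_{1,j}-q_{1,j}|$, not by $e^{L\tau}|x_{1,i}-q_{1,i}|$. With the logarithmic quantizer the box radii $\theta e^{L\tau}\bar{q}_{1,i}$ are highly unequal across coordinates (small near the deadzone, large for large $|q_{1,i}|$), so your claimed bound $|x'_{1,i}-\mathbf{x}(\tau,q_{1},u)_{i}|\leq\theta e^{L\tau}\bar{q}_{1,i}$ can fail, e.g.\ for $n=2$ with $|q_{1,1}|$ small, $|q_{1,2}|$ large, and $f$ rotating the second coordinate's error into the first; so the ``reconciliation'' you defer cannot be carried out from the stated hypotheses. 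For comparison, the paper's proof never attempts this coordinatewise refinement: it stops at the norm bound $\|\mathbf{x}(\tau,x_{1},u)-\mathbf{x}(\tau,q_{1},u)\|\leq\theta e^{L\tau}\|q_{1}\|$ and concludes nonemptiness of the intersection in \eqref{eqn-5} directly, exploiting that \eqref{eqn-5} only asks that $\hat{q}_{2}$ meet the inflated set rather than that $x'_{1}$ lie in the componentwise box. A genuinely coordinatewise argument of the kind you sketch would require a componentwise growth bound on the flow, as in the growth-bound framework of \cite{Reissig2017feedback}, which is a strictly stronger hypothesis than the Lipschitz condition assumed here.
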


\begin{proof}
Following from the definitions of $T_{\tau}(\Sigma)$ and $T_{\tau, \eta, \mu}(\Sigma)$, one has that $U_{2}\subseteq U_{1}$. Let $(x_{1}, \hat{q}_{1})\in\mathcal{F}$ with $x_{1}\in X_{1}$ and $\hat{q}_{1}\in X_{2}$, and we have that $x_{1}\in\hat{q}_{1}$. For each $u\in U_{2}(\hat{q}_{1})$, we obtain that $u\in U_{2}(\hat{q}_{1})\subseteq U_{2}\subseteq U_{1}$. In addition, $\Delta_{2}(\hat{q}_{1}, u)\neq\varnothing$ holds from the definition of $U_{2}(\hat{q}_{1})$. If $\Delta_{1}(x_{1}, u)=\varnothing$, then we have that $u\notin U_{1}$, which is a contradiction. As a result, $\Delta_{1}(x_{1}, u)\neq\varnothing$ and $u\in U_{1}(x_{1})$. We thus conclude that $U_{2}(\hat{q}_{1})\subseteq U_{1}(x_{1})$.

Given $\hat{q}_{1}, \hat{q}_{2}\in X_{2}$ and $u\in U_{2}(\hat{q}_{1})$, define $x_{2}:=\Delta_{1}(x_{1}, u)$, and thus $x_{1}\in\hat{q}_{1}$ holds from $(x_{1}, \hat{q}_{1})\in\mathcal{F}$, combining which with \eqref{eqn-2} yields that $\|x_{1}-q_{1}\|\leq\theta\|q_{1}\|$. If $\Delta_{1}(x_{1}, u)\cap\hat{q}_{2}\neq\varnothing$, there exists $x_{2}:=\mathbf{x}(\tau, x_{1}, u)\in X_{1}$ such that $x_{2}\in\hat{q}_{2}$ holds from \eqref{eqn-5}. From the Lipschitz property of the function $f$, one has $\|\mathbf{x}(\tau, x_{1}, u)-\mathbf{x}(\tau, q_{1}, u)\|\leq e^{L\tau}\|x_{1}-q_{1}\|\leq\theta e^{L\tau}\|q_{1}\|$, which implies that $\hat{q}_{2}\cap(\mathbf{x}(\tau, q_{1}, u)+\llbracket-\theta e^{L\tau}\bar{q}_{1}, \theta e^{L\tau}\bar{q}_{1}\rrbracket)\neq\varnothing$. Hence, $\hat{q}_{2}\in X_{2}$ holds from the construction of the abstraction $T_{\tau, \eta, \mu}(\Sigma)$, which in turn completes the proof.
\end{proof}

In the proof of Theorem \ref{thm-1}, $\Delta_{1}(x_{1}, u)\cap\hat{q}_{2}\neq\varnothing$ holds due to the unbounded state set studied in this paper. If the state set is bounded as in practical systems, we can impose an additional requirement such that $\Delta_{2}(\hat{q}_{1}, u)=\varnothing$ if $\hat{q}_{1}$ does not belong to the state set. Therefore, the feedback refinement relation is still valid in this case; see also \cite{Reissig2017feedback, Meyer2018compositional}. Since the logarithmic quantization is coarse and may lead to large approximation error, we can reduce the approximation error by applying  logarithmic quantization to the components of $X_{2}$, thereby leading to the improvement of the approximation accuracy. In such setting, the state and input sets are not rediscretized, and the obtained abstraction is refined.

\section{Controller Synthesis}
\label{sec-controller}

With the feedback refinement relation established in Section \ref{sec-symbolic}, the next step is to study controller synthesis for the system $T_{\tau}(\Sigma)$ via its abstraction $T_{\tau, \eta, \mu}(\Sigma)$. To begin with, we recall the definition of the abstract specification from \cite{Reissig2017feedback}.

\begin{definition}
\label{def-5}
Given a transition system $T:=(X, X^{0}, U, \Delta, Y, H)$ and a set $Z\in\mathbb{R}^{n}$, any subset $\mathcal{S}\subseteq Z^{\infty}$ is called a \textit{specification} on $Z$. The system $T$ is said to \textit{satisfy a specification $\mathcal{S}$} on $U\times Y$ if from certain time instant, the trajectory of $T$ always belongs to $\mathcal{S}$.
\end{definition}

\begin{definition}
\label{def-6}
Given two transition systems $T_{i}:=(X_{i}, X^{0}_{i}, U_{i}, \Delta_{i}, Y_{i}, H_{i})$, $i\in\{1, 2\}$. Let $\mathcal{F}\subseteq X_{1}\times X_{2}$ be a relation and $\mathcal{S}_{1}$ be a specification on $U_{1}\times X_{1}$. A specification $\mathcal{S}_{2}$ on $U_{2}\times X_{2}$ is called an \textit{abstract specification} associated with $T_{1}, T_{2}, \mathcal{S}_{1}$ and $\mathcal{F}$, if $(u, x_{1})\in\mathcal{S}_{1}$ holds for all $(u, x_{2})\in\mathcal{S}_{2}$ and all $x_{1}\in X_{1}$ with $(x_{1}, x_{2})\in\mathcal{F}$.
\end{definition}

If $T_{1}\preceq_{\mathcal{F}}T_{2}$ and $\mathcal{S}_{2}$ is an abstract specification associated with $T_{1}, T_{2}, \mathcal{S}_{1}$ and $\mathcal{F}$, then we write $(T_{1}, \mathcal{S}_{1})\preceq_{\mathcal{F}}(T_{2}, \mathcal{S}_{2})$ for the sake of simplicity. In the following, we assume that $\mathcal{F}$ is a feedback refinement relation. For the control system $T_{\tau}(\Sigma)$, assume that the desired specification is given by $\mathcal{S}:=\bar{U}_{1}\times\bar{X}_{1}\subseteq U_{1}\times X_{1}$ with $\bar{U}_{1}=\bigcup_{x\in \bar{X}_{1}}\enab(x)$. Define
\begin{align*}
\bar{X}_{2}&:=\left\{\hat{q}\in X_{2}: (x, \hat{q})\in\mathcal{F}, x\in\bar{X}_{1}, \hat{q}\subseteq\bar{X}_{1}\right\},  \\
\bar{U}_{2}&:=\bigcup_{q\in[\mathbb{R}^{n}]_{\eta}, \hat{q}\in\bar{X}_{2}}U_{2}(\hat{q}), \quad Q_{\eta}(\mathcal{S}):=\bar{U}_{2}\times\bar{X}_{2}.
\end{align*}
As a result, $Q_{\eta}(\mathcal{S})\subseteq U_{2}\times X_{2}$. 

\begin{proposition}
\label{prop-1}
Assume that $T_{\tau}(\Sigma)\preceq_{\mathcal{F}}T_{\tau, \eta, \mu}(\Sigma)$. If $\mathcal{S}\subseteq U_{1}\times X_{1}$ is a specification for the control system $T_{\tau}(\Sigma)$, then $Q_{\eta}(\mathcal{S})$ is a abstract specification for $T_{\tau, \eta, \mu}(\Sigma)$.
\end{proposition}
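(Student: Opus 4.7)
The plan is to verify Definition \ref{def-6} directly: for the candidate abstract specification $\mathcal{S}_{2}:=Q_{\eta}(\mathcal{S})=\bar{U}_{2}\times\bar{X}_{2}$, I must show that whenever $(u,\hat{q})\in\bar{U}_{2}\times\bar{X}_{2}$ and $x_{1}\in X_{1}$ satisfies $(x_{1},\hat{q})\in\mathcal{F}$, then $(u,x_{1})\in\mathcal{S}=\bar{U}_{1}\times\bar{X}_{1}$. Since $\mathcal{S}$ and $Q_{\eta}(\mathcal{S})$ are products, the verification splits into two independent claims: $x_{1}\in\bar{X}_{1}$ and $u\in\bar{U}_{1}$.

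For the state component, I would unfold the definition of $\mathcal{F}$ from Theorem \ref{thm-1}: $(x_{1},\hat{q})\in\mathcal{F}$ is equivalent to $x_{1}\in\hat{q}$. By the definition of $\bar{X}_{2}$, any $\hat{q}\in\bar{X}_{2}$ satisfies $\hat{q}\subseteq\bar{X}_{1}$. Chaining these two inclusions gives $x_{1}\in\hat{q}\subseteq\bar{X}_{1}$, which settles the first coordinate.

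For the input component, I would argue as follows. By construction, $\bar{U}_{2}=\bigcup_{\hat{q}'\in\bar{X}_{2}}U_{2}(\hat{q}')$, so there exists some $\hat{q}'\in\bar{X}_{2}$ with $u\in U_{2}(\hat{q}')$. Fix any $x_{1}'\in\hat{q}'$, which is possible because $\hat{q}'$ is a nonempty quantization cell; then $(x_{1}',\hat{q}')\in\mathcal{F}$ by the definition of $\mathcal{F}$. Since $T_{\tau}(\Sigma)\preceq_{\mathcal{F}}T_{\tau,\eta,\mu}(\Sigma)$, clause (i) of Definition \ref{def-3} yields $U_{2}(\hat{q}')\subseteq U_{1}(x_{1}')=\enab(x_{1}')$, and hence $u\in\enab(x_{1}')$. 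Because $\hat{q}'\in\bar{X}_{2}$ implies $\hat{q}'\subseteq\bar{X}_{1}$, we also have $x_{1}'\in\bar{X}_{1}$. Therefore $u\in\bigcup_{x\in\bar{X}_{1}}\enab(x)=\bar{U}_{1}$, and combining with the previous paragraph gives $(u,x_{1})\in\bar{U}_{1}\times\bar{X}_{1}=\mathcal{S}$.

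The argument is essentially a chain of set inclusions, so I do not anticipate a real technical obstacle; it is purely a matter of unfolding the definitions of $\bar{X}_{2}$, $\bar{U}_{2}$, $\mathcal{F}$, and the feedback refinement relation. The only subtle point worth being explicit about is that the witness cell $\hat{q}'$ certifying $u\in\bar{U}_{2}$ need not coincide with the cell $\hat{q}$ associated to $x_{1}$; this is why the feedback refinement property must be invoked on the auxiliary point $x_{1}'\in\hat{q}'$ rather than on $x_{1}$ itself.
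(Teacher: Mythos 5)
Your proof is correct, and it verifies exactly what Definition \ref{def-6} asks for: that $(u,x_{1})\in\mathcal{S}$ for \emph{every} pair $(u,\hat{q})\in Q_{\eta}(\mathcal{S})$ and \emph{every} $x_{1}$ with $(x_{1},\hat{q})\in\mathcal{F}$. The paper's proof has the same basic shape for the state component ($x_{1}\in\hat{q}\subseteq\bar{X}_{1}$ via the definition of $\bar{X}_{2}$), but differs in two ways. First, for the input component the paper simply asserts $u\in\bar{U}_{2}\subseteq\bar{U}_{1}$ without justification, whereas you actually prove this inclusion by picking a witness cell $\hat{q}'\in\bar{X}_{2}$ with $u\in U_{2}(\hat{q}')$, a point $x_{1}'\in\hat{q}'\subseteq\bar{X}_{1}$, and invoking clause (i) of the feedback refinement relation to get $u\in U_{2}(\hat{q}')\subseteq U_{1}(x_{1}')=\enab(x_{1}')\subseteq\bar{U}_{1}$; your remark that the witness cell need not be the cell containing $x_{1}$ (because $Q_{\eta}(\mathcal{S})$ is a full product $\bar{U}_{2}\times\bar{X}_{2}$) is precisely the point the paper glosses over, and the paper also phrases the quantification awkwardly (``there exists $x_{1}$'' where the definition requires all such $x_{1}$), which you handle correctly. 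Second, the paper adds a further paragraph propagating the property along transitions ($x_{2}:=\Delta_{1}(x_{1},u)$, $\hat{q}_{2}:=\Delta_{2}(\hat{q}_{1},u)$, ``by iteration''), which addresses invariance of the specification under the dynamics rather than the static condition of Definition \ref{def-6}; since the definition of abstract specification is the static one, omitting that step does not leave a gap in your argument, so on balance your proof is tighter where it matters and simply drops material that is not required by the statement.
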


\begin{proof}
For any $(u, \hat{q}_{1})\in Q_{\eta}(\mathcal{S})$, we have that $u\in\bar{U}_{2}\subseteq\bar{U}_{1}$ and $\hat{q}_{1}\in\bar{X}_{2}\subseteq X_{2}$. Since $T_{\tau}(\Sigma)\preceq_{\mathcal{F}}T_{\tau, \eta, \mu}(\Sigma)$, there exists $x_{1}\in X_{1}$ such that $(x_{1}, \hat{q}_{1})\in\mathcal{F}$, which implies that $x_{1}\in\hat{q}_{1}$. Thus, we obtain from the definition of $\bar{X}_{2}$ that $x_{1}\in\bar{X}_{2}\subseteq\bar{X}_{1}$, which in turn gives that $(u, x_{1})\in\mathcal{S}$.

Given a $u\in U_{2}(\hat{q}_{1})$, define $x_{2}:=\Delta_{1}(x_{1}, u)$ and $\hat{q}_{2}:=\Delta_{2}(\hat{q}_{1}, u)\in\bar{X}_{2}\subseteq X_{2}$. $(x_{2}, \hat{q}_{2})\in\mathcal{F}$ holds from $T_{\tau}(\Sigma)\preceq_{\mathcal{F}}T_{\tau, \eta, \mu}(\Sigma)$, which thus implies that $x_{2}\in\hat{q}_{2}$. Hence, $x_{2}\in\bar{X}_{1}\subseteq X_{1}$ holds from the definition of $\bar{X}_{2}$, which indicates that $(u, x_{2})\in\mathcal{S}$. By iteration, we deduce that $(u, x_{2})\in\mathcal{S}$ for all $(u, \hat{q}_{2})\in Q_{\eta}(\mathcal{S})$ and all $(x_{1}, \hat{q}_{1})\in\mathcal{F}$.
\end{proof}

In the following, we recall the definition of the controller for the control system $T=(X, X^{0}, U, \Delta, Y, H)$ from \cite{Girard2012controller}.

\begin{definition}
\label{def-7}
Given a transition system $T=(X, X^{0}, U, \Delta, Y, H)$, a \textit{controller} is a map $\mathbb{C}: X\rightarrow2^{U}$, and is \textit{well-defined} if $\mathbb{C}(x)\subseteq\enab(x)$ for all $x\in X$. The \textit{controlled system} is denoted by the transition system $T_{c}=(X, X^{0}, U, \Delta_{c}, Y, H)$ with the transition relation given by $x'\in\Delta_{c}(x, u)$ if and only if $u\in \mathbb{C}(x)$ and $x'\in\Delta(x, u)$.
\end{definition}

According to Proposition \ref{prop-1} and Theorem VI.3 in \cite{Reissig2017feedback}, the following result is direct, and the proof is omitted here.

\begin{proposition}
\label{prop-2}
If $(T_{\tau}(\Sigma), \mathcal{S})\preceq_{\mathcal{F}}(T_{\tau, \eta, \mu}(\Sigma), Q_{\eta}(\mathcal{S}))$ and $\mathbb{C}_{1}: X_{2}\rightarrow2^{U_{2}}$ is a controller for $(T_{\tau, \eta, \mu}(\Sigma), Q_{\eta}(\mathcal{S}))$, then the map $\mathbb{C}: X_{1}\rightarrow2^{U_{1}}$, defined as $\mathbb{C}(x):=\mathbb{C}_{1}(\mathcal{F}(x))$, is a controller for $(T_{\tau}(\Sigma), \mathcal{S})$.
\end{proposition}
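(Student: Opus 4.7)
The plan is to verify the two requirements that make $\mathbb{C}$ a controller for $(T_{\tau}(\Sigma), \mathcal{S})$ in the sense of Definitions \ref{def-5} and \ref{def-7}: first, that $\mathbb{C}$ is well-defined, i.e.\ $\mathbb{C}(x)\subseteq\enab(x)$ for every $x\in X_{1}$; second, that every trajectory of the controlled system $T_{\tau}(\Sigma)_{c}$ eventually lies in $\mathcal{S}$. Both items are expected to fall out of the feedback refinement relation (Definition \ref{def-3}) combined with Proposition \ref{prop-1}.

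For well-definedness, I would fix $x\in X_{1}$ and set $\hat{q}:=\mathcal{F}(x)$, so that $(x,\hat{q})\in\mathcal{F}$. Any $u\in\mathbb{C}(x)=\mathbb{C}_{1}(\hat{q})$ belongs to $U_{2}(\hat{q})=\enab(\hat{q})$ because $\mathbb{C}_{1}$ is well-defined on the abstract side. Condition (i) of Definition \ref{def-3} applied at $(x,\hat{q})$ then gives $U_{2}(\hat{q})\subseteq U_{1}(x)=\enab(x)$, so $u\in\enab(x)$, as required.

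For specification satisfaction, I would take an arbitrary controlled trajectory $(x_{k},u_{k})_{k\geq 0}$ of $T_{\tau}(\Sigma)_{c}$ and lift it to an abstract controlled trajectory by setting $\hat{q}_{k}:=\mathcal{F}(x_{k})$. For each $k$, since $u_{k}\in\mathbb{C}(x_{k})=\mathbb{C}_{1}(\hat{q}_{k})\subseteq U_{2}(\hat{q}_{k})$ and $x_{k+1}=\Delta_{1}(x_{k},u_{k})$, condition (ii) of Definition \ref{def-3} delivers $\mathcal{F}(x_{k+1})\subseteq\Delta_{2}(\hat{q}_{k},u_{k})$, i.e.\ $\hat{q}_{k+1}\in\Delta_{2}(\hat{q}_{k},u_{k})$. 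Hence $(\hat{q}_{k},u_{k})_{k\geq 0}$ is a trajectory of the controlled abstract system. Because $\mathbb{C}_{1}$ enforces $Q_{\eta}(\mathcal{S})$ from some time $k_{0}$ onward, we have $(u_{k},\hat{q}_{k})\in Q_{\eta}(\mathcal{S})$ for all $k\geq k_{0}$. Proposition \ref{prop-1} (the abstract-specification property) then transfers this pointwise to $(u_{k},x_{k})\in\mathcal{S}$ for all $k\geq k_{0}$, via the pairing $(x_{k},\hat{q}_{k})\in\mathcal{F}$. This is exactly the clause of Definition \ref{def-5} required for $T_{\tau}(\Sigma)_{c}$.

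The main obstacle is the bookkeeping in the second step, namely ensuring that (a) every concrete trajectory really does admit a well-defined symbolic shadow (automatic here because the map $\mathcal{F}$ from Theorem \ref{thm-1} is single-valued: $\mathcal{F}(x)=\hat{q}$ iff $x\in\hat{q}$) and (b) the same entry time $k_{0}$ passes from the abstract to the concrete level (provided directly by the pointwise statement of Proposition \ref{prop-1}). Once these two facts are observed, the argument is essentially the content of Theorem VI.3 in \cite{Reissig2017feedback}, which is why the authors can legitimately omit the full proof.
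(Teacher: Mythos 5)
Your proof is correct and matches the route the paper itself points to: the paper omits the proof of Proposition \ref{prop-2}, declaring it direct from Proposition \ref{prop-1} together with Theorem VI.3 of \cite{Reissig2017feedback}, and your two steps (well-definedness of $\mathbb{C}$ via condition (i) of Definition \ref{def-3}, then lifting each controlled concrete trajectory to an abstract one via condition (ii) and transferring $Q_{\eta}(\mathcal{S})$ back to $\mathcal{S}$ through the abstract-specification property) reconstruct exactly that standard feedback-refinement argument. The only implicit assumption you make, namely that $\mathbb{C}_{1}$ is well-defined in the sense of Definition \ref{def-7}, is part of what ``controller for $(T_{\tau,\eta,\mu}(\Sigma), Q_{\eta}(\mathcal{S}))$'' is intended to mean, so this is not a gap.
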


\subsection{Safety Controller Synthesis}
\label{subsec-safety}

Let $\mathcal{O}_{s}\subseteq Y$ be a output set associated with safe states. In this subsection, we consider the safety synthesis problem, which is to determine a controller to keep the system output inside the specified safe set $\mathcal{O}_{s}$.

\begin{definition}[see \cite{Girard2013low}]
\label{def-8}
Let $\mathcal{O}_{s}\subseteq Y$ be a set of safe outputs. A controller $\mathbb{C}$ is a \textit{safety controller} for $T_{c}=(X, X^{0}, U, \Delta_{c}, Y, H)$ with the specification $\mathcal{O}_{s}$, if for all $x\in\dom(\mathbb{C})$, (i) $H(x)\in\mathcal{O}_{s}$; (ii) $\forall u\in\mathbb{C}(x)$, $\Delta_{c}(x, u)\subseteq\dom(\mathbb{C})$, where $\dom(\mathbb{C}):=\{x\in X: \mathbb{C}(x)\neq\varnothing\}$.
\end{definition}

\begin{lemma}[see \cite{Girard2012controller}]
\label{lem-1}
Given a transition system $T$ with the specification $\mathcal{O}_{s}$, a controller $\mathbb{C}$ is a safety controller if and only if for all the non-blocking states of the controlled system $T_{c}$, $H(x)\in\mathcal{O}_{s}$ and $x'\in\Delta(x, \mathbb{C}(x))$ is non-blocking.
\end{lemma}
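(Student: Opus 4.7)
The plan is to prove the equivalence by first establishing the key bookkeeping identity that the set of non-blocking states of the controlled system $T_c$ coincides exactly with $\dom(\mathbb{C})$, and then unpacking Definition \ref{def-8} in each direction. Indeed, since $\mathbb{C}$ is well-defined, $\mathbb{C}(x) \subseteq \enab(x)$, so $u \in \mathbb{C}(x)$ implies $\Delta(x,u) \neq \varnothing$ and hence $\Delta_c(x,u) \neq \varnothing$; conversely, $\Delta_c(x,u) \neq \varnothing$ forces $u \in \mathbb{C}(x)$ by the definition of $\Delta_c$. Thus $x$ is non-blocking in $T_c$ if and only if $\mathbb{C}(x) \neq \varnothing$, i.e., $x \in \dom(\mathbb{C})$. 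With this in hand, the two implications become routine.

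For the forward direction, I would assume $\mathbb{C}$ is a safety controller per Definition \ref{def-8}. Fix an arbitrary non-blocking state $x$ of $T_c$, so $x \in \dom(\mathbb{C})$. Condition (i) of Definition \ref{def-8} immediately gives $H(x) \in \mathcal{O}_s$. Now pick any $x' \in \Delta(x, \mathbb{C}(x))$, so $x' \in \Delta(x,u)$ for some $u \in \mathbb{C}(x)$; by the definition of $\Delta_c$ this yields $x' \in \Delta_c(x,u)$. Condition (ii) of Definition \ref{def-8} then places $x'$ in $\dom(\mathbb{C})$, and by the identity above $x'$ is non-blocking in $T_c$, as required.

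For the backward direction, I would assume the non-blocking condition of the statement and verify both clauses of Definition \ref{def-8}. Take $x \in \dom(\mathbb{C})$; by the identity, $x$ is non-blocking in $T_c$, so the hypothesis directly gives $H(x) \in \mathcal{O}_s$, establishing clause (i). For clause (ii), fix any $u \in \mathbb{C}(x)$ and any $x' \in \Delta_c(x,u)$. Because $\Delta_c(x,u) \subseteq \Delta(x,u) \subseteq \Delta(x, \mathbb{C}(x))$, the hypothesis tells us that $x'$ is non-blocking in $T_c$, hence $x' \in \dom(\mathbb{C})$. Therefore $\Delta_c(x,u) \subseteq \dom(\mathbb{C})$, completing the verification.

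I do not expect any substantive obstacle in this proof; it is essentially a translation between two equivalent phrasings of the invariance property. The only subtle step is the first one, where one must carefully use the well-definedness assumption $\mathbb{C}(x) \subseteq \enab(x)$ to identify $\dom(\mathbb{C})$ with the non-blocking states of $T_c$ (if that assumption were dropped, the equivalence would fail because one could have $\mathbb{C}(x) \neq \varnothing$ while every $u \in \mathbb{C}(x)$ had $\Delta(x,u) = \varnothing$). Once that identification is made, both implications reduce to rewriting definitions.
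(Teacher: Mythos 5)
Your proof is correct. Note that the paper itself gives no proof of Lemma \ref{lem-1} --- it is quoted from \cite{Girard2012controller} --- so there is no argument of the authors' to compare against; yours is the standard one. The identification of the non-blocking states of $T_{c}$ with $\dom(\mathbb{C})$, which you rightly base on well-definedness ($\mathbb{C}(x)\subseteq\enab(x)$), is exactly the point where the equivalence could fail, and your two directions then reduce cleanly to unpacking Definitions \ref{def-7} and \ref{def-8}; the only caveat worth stating explicitly is that the lemma, as phrased, implicitly presumes $\mathbb{C}$ is well-defined, which is the setting in which it is used (e.g.\ in the proof of Theorem \ref{thm-2}).
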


We are now in the position to design a safety controller for the control system $T_{\tau}(\Sigma)$ with the specification $\mathcal{O}_{s}$.

\begin{theorem}
\label{thm-2}
Assume that $T_{\tau}(\Sigma)\preceq_{\mathcal{F}}T_{\tau, \eta, \mu}(\Sigma)$. If $\mathbb{C}_{1}: X_{2}\rightarrow2^{U_{2}}$ is a safety controller for $T_{\tau, \eta, \mu}(\Sigma)$ with the specification $Q_{\eta}(\mathcal{O}_{s})$, let $\mathbb{C}: X_{2}\rightarrow2^{U_{1}}$ be given by
\begin{equation}
\label{eqn-6}
\mathbb{C}(x):=\mathbb{C}_{1}(\mathcal{F}(x)), \quad \forall x\in X_{1},
\end{equation}
then the map $\mathbb{C}: X_{1}\rightarrow2^{U_{1}}$ is well-defined, and is a safety controller for $T_{\tau}(\Sigma)$ with the specification $\mathcal{O}_{s}$.
\end{theorem}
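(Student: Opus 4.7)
The plan is to verify directly the two conditions of Definition~\ref{def-8} for $\mathbb{C}$, by lifting each requirement on the concrete side to the corresponding (already satisfied) requirement for $\mathbb{C}_{1}$ on the abstract side, using exclusively the two clauses of the feedback refinement relation $\mathcal{F}$ from Definition~\ref{def-3} together with the containment $\hat{q}\subseteq\bar{X}_{1}$ built into $Q_{\eta}(\mathcal{O}_{s})$ (see the construction of $\bar{X}_{2}$ preceding Proposition~\ref{prop-1}).

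First I would establish well-definedness of $\mathbb{C}$, i.e.\ $\mathbb{C}(x)\subseteq\enab(x)$ for every $x\in X_{1}$. Fix $x\in X_{1}$ and set $\hat{q}:=\mathcal{F}(x)$, so $(x,\hat{q})\in\mathcal{F}$. If $u\in\mathbb{C}(x)=\mathbb{C}_{1}(\hat{q})$, then by well-definedness of $\mathbb{C}_{1}$ we have $u\in\enab(\hat{q})=U_{2}(\hat{q})$. Clause~(i) of Definition~\ref{def-3} then gives $U_{2}(\hat{q})\subseteq U_{1}(x)$, so $u\in\enab(x)$; this also shows $\mathbb{C}(x)\subseteq U_{1}$, so the codomain in \eqref{eqn-6} is correct.

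Next I would verify the safety conditions of Definition~\ref{def-8} for the closed loop $T_{\tau}(\Sigma)_{c}$. For condition~(i), take $x\in\dom(\mathbb{C})$; then $\mathcal{F}(x)\in\dom(\mathbb{C}_{1})$, and safety of $\mathbb{C}_{1}$ yields $H_{2}(\mathcal{F}(x))=\mathcal{F}(x)\in Q_{\eta}(\mathcal{O}_{s})$, i.e.\ $\mathcal{F}(x)\in\bar{X}_{2}$. By the very definition of $\bar{X}_{2}$, this quantization region is entirely contained in $\bar{X}_{1}=\mathcal{O}_{s}$, and since $x\in\mathcal{F}(x)$ and $H_{1}$ is the natural inclusion, $H_{1}(x)=x\in\mathcal{O}_{s}$. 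For condition~(ii), let $u\in\mathbb{C}(x)=\mathbb{C}_{1}(\mathcal{F}(x))$ and $x'\in\Delta_{1}(x,u)$. Applying clause~(ii) of Definition~\ref{def-3} to $(x,\mathcal{F}(x))\in\mathcal{F}$ and this $u$, one obtains $\mathcal{F}(x')\in\Delta_{2}(\mathcal{F}(x),u)$. Safety of $\mathbb{C}_{1}$ then forces $\Delta_{2}(\mathcal{F}(x),u)\subseteq\dom(\mathbb{C}_{1})$, so $\mathcal{F}(x')\in\dom(\mathbb{C}_{1})$, whence $\mathbb{C}(x')=\mathbb{C}_{1}(\mathcal{F}(x'))\neq\varnothing$ and $x'\in\dom(\mathbb{C})$, as required.

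I do not expect any single step to be a genuine obstacle, since every ingredient is already in place: the only nontrivial point is the alignment between the set-valued image $\mathcal{F}(x')$ in Definition~\ref{def-3} and the functional reading used in \eqref{eqn-6}. Because $\mathcal{F}$ partitions $X_{1}$ into the quantization regions $\{\hat{q}\}_{q\in[\mathbb{R}^{n}]_{\eta}}$, the ``set'' $\mathcal{F}(x')$ collapses to the single element $\mathcal{F}(x')\in X_{2}$, so the inclusion $\mathcal{F}(x')\subseteq\Delta_{2}(\mathcal{F}(x),u)$ produced by the refinement relation is literally the membership needed to invoke safety of $\mathbb{C}_{1}$; pointing this out explicitly once is all that is required to make the argument rigorous. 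Finally, applying Lemma~\ref{lem-1} to the controlled system $T_{\tau}(\Sigma)_{c}$ (using the two verified conditions above) would complete the proof.
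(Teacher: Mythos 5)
Your proposal is correct and takes essentially the same route as the paper's proof: well-definedness of $\mathbb{C}$ is obtained by lifting $u\in\mathbb{C}_{1}(\mathcal{F}(x))$ through the feedback refinement relation, and the two safety conditions of Definition~\ref{def-8} are verified by pushing $x\in\dom(\mathbb{C})$ and its successors to the abstract side where safety of $\mathbb{C}_{1}$ applies. The only cosmetic differences are that you invoke clause~(i) of Definition~\ref{def-3} directly (the paper constructs an abstract successor first) and you unfold the containment $\hat{q}\subseteq\bar{X}_{1}$ in the definition of $\bar{X}_{2}$ where the paper cites Proposition~\ref{prop-1}; these are the same underlying facts.
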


\begin{proof}
First, we prove that the controller $\mathbb{C}$ is well-defined. Let $x_{1}\in X_{1}$, and $u\in\mathbb{C}(x_{1})$. It follows from \eqref{eqn-6} that there exists $\hat{q}_{1}\in X_{2}$ such that $(x_{1}, \hat{q}_{1})\in\mathcal{F}$ and $u\in\mathbb{C}_{1}(\hat{q}_{1})$. Since $\mathbb{C}_{1}$ is well-defined, we have that $u\in\enab(\hat{q}_{1})$. That is, there exists $\hat{q}_{2}\in\Delta_{2}(\hat{q}_{1}, u)$. It follows from the feedback refinement relation that there exists $x_{2}\in\Delta_{1}(x_{1}, u)$ such that $(x_{2}, \hat{q}_{2})\in\mathcal{F}$, which implies that $u\in\enab(x_{1})$. Thus, for all $x_{1}\in X_{1}$, $\mathbb{C}(x_{1})\subseteq\enab(x_{1})$. Thus, $\mathbb{C}$ is well-defined.

Next, we prove that $\mathbb{C}$ is a safety controller for the specification $\mathcal{O}_{s}$. Let $x_{1}\in X_{1}$ such that $\mathbb{C}(x_{1})\neq\varnothing$, and let $u\in\mathbb{C}(x_{1})$. By \eqref{eqn-6}, there exists $\hat{q}_{1}\in X_{2}$ such that $(x_{1}, \hat{q}_{1})\in\mathcal{F}$ and $u\in\mathbb{C}_{1}(\hat{q}_{1})$. Since $\mathbb{C}_{1}$ is a safety controller for the specification $Q_{\eta}(\mathcal{O}_{s})$ and $\mathbb{C}_{1}(x_{1})\neq\varnothing$, we have from Lemma \ref{lem-1} that $\hat{q}_{1}\in Q_{\eta}(\mathcal{O}_{s})$. It follows from Proposition \ref{prop-1} that $Q_{\eta}(\mathcal{O}_{s})$ is an abstraction of the specification $\mathcal{O}_{s}$. Therefore, we obtain from $(x_{1}, \hat{q}_{1})\in\mathcal{F}$ that $x_{1}\in\mathcal{O}_{s}$.

Let $x_{2}:=\Delta_{1}(x_{1}, u)$. We have from $T_{\tau}(\Sigma)\preceq_{\mathcal{F}}T_{\tau, \eta, \mu}(\Sigma)$ that there exists $\hat{q}_{2}\in\Delta_{2}(\hat{q}_{1}, u)$ such that $(x_{2}, \hat{q}_{2})\in\mathcal{F}$. Since $\mathbb{C}_{1}$ is a safety controller for the specification $Q_{\eta}(\mathcal{O}_{s})$ and $u\in\mathbb{C}_{1}(\hat{q}_{1})$, we have, from Lemma \ref{lem-1}, that $\mathbb{C}_{1}(\hat{q}_{2})\neq\varnothing$. Finally, \eqref{eqn-6} implies that $\mathbb{C}_{1}(\hat{q}_{2})\subseteq\mathbb{C}(x_{2})$, and therefore $\mathbb{C}(X_{2})\neq\varnothing$. As a result, we conclude that $\mathbb{C}$ is a safety controller for the specification $\mathcal{O}_{s}$, which completes the proof.
\end{proof}

\begin{algorithm}[!t]
\caption{Safe Controller Design}
\label{alg-1}
\hspace*{0.02in} {\bf Input:} $\Upsilon, \Upsilon_{1}\subseteq X_{2}, \mathbb{C}_{1}$ \\
\hspace*{0.02in} {\bf Output:} $(X_{c}, \mathbb{C}_{1})$ with $X_{c}=\Upsilon$
\begin{algorithmic}[1]
\State Explore($\Upsilon\backslash\Upsilon_{1}$)
\State $W=\cp(\Upsilon)\cap\Upsilon$
\State $\mathbb{C}_{1}=\mathbb{C}_{1}\cup\{(W, U_{2}, \varnothing)\}$
\State $\Upsilon_{1}=\Upsilon_{1}\cup W$
\If{$\Upsilon=\Upsilon_{1}$}
\State \Return $(\Upsilon, \mathbb{C}_{1})$
\Else
\For{$\Upsilon\neq\Upsilon_{1}$}
\State $\Upsilon=\Upsilon_{1}$
\State $\Upsilon_{1}=\varnothing$
\State Explore($\Upsilon$)
\State $W=\cp(\Upsilon)\cap\Upsilon$
\State $\mathbb{C}_{1}=\mathbb{C}_{1}\cup\{(W, U_{2}, \varnothing)\}$
\State $\Upsilon_{1}=W$
\EndFor
\EndIf
\end{algorithmic}
\end{algorithm}

In Theorem \ref{thm-2}, the abstract specification $Q_{\eta}(\mathcal{O}_{s})$ is applied in synthesizing the safety controller, which is different from the results in \cite{Girard2012controller}, where the contraction and expansion of $\mathcal{O}_{s}$ are used. Following Definition \ref{def-7}, the controller for $T_{\tau, \eta, \mu}(\Sigma)$ can be written as a transition system $\mathbb{C}_{1}=(X_{c}, U_{2}, G)$ with the state set $X_{c}\subseteq X_{2}$, the output set $U_{2}$ and the transition relation $G\subseteq X_{c}\times U_{2}$. In practice, we are interested in a bounded set of states $X_{2}$, which implies that the state set $X_{1}$ is also bounded; see \cite{Meyer2018compositional}. Next, we focus on how to obtain the controller on $X_{2}$ via the abstract specification.

To this end, assume that $\hat{\mathcal{O}}_{s}:=\{\hat{q}\in X_{2}|(x, \hat{q})\in\mathcal{F}, x\in X_{1}, \hat{q}\in\mathcal{O}_{s}\}$ is a under-approximation of $\mathcal{O}_{s}\in Y$. From Proposition \ref{prop-1}, it is easy to verify that $\hat{\mathcal{O}}_{s}$ is a abstract specification of $\mathcal{O}_{s}$. Define the following sets (see \cite{Maler1995synthesis})
\begin{align}
\label{eqn-7}
W_{0}&:=\hat{\mathcal{O}}_{s}, \quad
W_{i+1}:=\cp(W_{i})\cap\hat{\mathcal{O}}_{s}, \quad i\in\mathbb{N}^{+}.
\end{align}
In \eqref{eqn-7}, the function $\cp: 2^{X_{2}}\rightarrow2^{X_{2}}$ is the controllable predecessor operator \cite{Maler1995synthesis}, and defined as: for a set $\Upsilon\subseteq X_{2}$,
\begin{align*}
\cp(\Upsilon)&:=\left\{\hat{q}\in X_{2}|\exists u\in U_{2} \text{ such that }\Delta_{2}(\hat{q}, u)\subseteq\Upsilon\right\}.
\end{align*}
The iteration in \eqref{eqn-7} ends when $W_{i}=W_{i+1}$. Since the sequence $\{W_{i}\}$ is monotone over a finite domain, the convergence of such a sequence is guaranteed in finite time; see \cite{Maler1995synthesis}. Assume that the number of the iterations is $M\in\mathbb{N}^{+}$.

\begin{algorithm}[!t]
\caption{Explore}
\label{alg-2}
\hspace*{0.02in} {\bf Input:} $\Upsilon\subseteq X_{2}$ \\
\hspace*{0.02in} {\bf Output:} transition relation $G$
\begin{algorithmic}[1]
\For{$\hat{q}\in\Upsilon$, $u\in U_{2}$}
\If{$\Delta_{2}(\hat{q}, u)$ is not defined}
\State compute $\Delta_{2}(\hat{q}, u)$
\If{$\Delta_{2}(\hat{q}, u)\subseteq\Upsilon$}
\State map $\hat{q}$ to $u$
\EndIf
\EndIf
\EndFor
\end{algorithmic}
\end{algorithm}

According to \eqref{eqn-7}, define the state set $X_{c}:=W_{M}$. Following the definition of the abstraction $T_{\tau, \eta, \mu}(\Sigma)$, the transition $G: X_{c}\rightarrow U_{2}$ is defined as: for all $q\in X_{c}$ and $u\in U_{2}$, $u=G(\hat{q})\Leftrightarrow\Delta_{2}(\hat{q}, u)\subseteq X_{c}$. As a result, we have that $\mathbb{C}_{1}=(X_{c}, U_{2}, G)$, which is a safety controller for the abstraction $T_{\tau, \eta, \mu}(\Sigma)$ with the specification $\hat{\mathcal{O}}_{s}$.

Based on the above analysis, the synthesis algorithm is summarized in Algorithm \ref{alg-1}, which is terminated after $M$ iterations. In Algorithm \ref{alg-1}, $\Upsilon$ and $\Upsilon_{1}$ are initialized as $\hat{\mathcal{O}}_{s}$ and $\varnothing$, respectively. Algorithm \ref{alg-2} is to develop the maps from $\hat{q}\in\Upsilon$ to $u\in U_{2}$ such that the transition $G$ is obtained. Because of the monotonic nature of the iterative computation of safe sets, the set $\Upsilon$ is a subset of $\hat{\mathcal{O}}_{s}$. Based on \eqref{eqn-7} and Algorithm \ref{alg-1}, the controller $\mathbb{C}_{1}$ is obtained iteratively, which further leads to the controller for the system $T_{\tau}(\Sigma)$.

\begin{theorem}
\label{thm-3}
Assume that $T_{\tau}(\Sigma)\preceq_{\mathcal{F}}T_{\tau, \eta, \mu}(\Sigma)$. The controller $\mathbb{C}_{1}$ obtained via Algorithm \ref{alg-1} is a safety controller for $T_{\tau, \eta, \mu}(\Sigma)$ with the specification $\hat{\mathcal{O}}_{s}$. Furthermore, the controller $\mathbb{C}(x):=\mathbb{C}_{1}(\mathcal{F}(x))$ is a safety controller for $T_{\tau}(\Sigma)$ with the specification $\mathcal{O}_{s}$.
\end{theorem}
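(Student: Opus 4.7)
The plan is to split the proof into two parts that correspond to the two assertions. First, I would verify that the output $\mathbb{C}_{1}$ of Algorithm \ref{alg-1} satisfies Definition \ref{def-8} on $T_{\tau,\eta,\mu}(\Sigma)$ with safe-output set $\hat{\mathcal{O}}_{s}$; second, I would invoke Theorem \ref{thm-2} to transfer the safety property to $T_{\tau}(\Sigma)$ under the original specification $\mathcal{O}_{s}$ via the refinement $\mathbb{C}(x):=\mathbb{C}_{1}(\mathcal{F}(x))$.

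For the first part, I would first show that Algorithm \ref{alg-1} implements the fixed-point recursion in \eqref{eqn-7}. Starting from $\Upsilon=\hat{\mathcal{O}}_{s}$ and $\Upsilon_{1}=\varnothing$, the update $W=\cp(\Upsilon)\cap\Upsilon$ corresponds exactly to $W_{i+1}=\cp(W_{i})\cap\hat{\mathcal{O}}_{s}$ once one uses the inductive invariant $W_{i}\subseteq\hat{\mathcal{O}}_{s}$, so that intersecting with the current $\Upsilon$ is equivalent to intersecting with $\hat{\mathcal{O}}_{s}$. Because the sequence $\{W_{i}\}$ is monotonically decreasing on the (effectively) finite domain $X_{2}$ of interest, the termination test $\Upsilon=\Upsilon_{1}$ is reached after some $M\in\mathbb{N}^{+}$ iterations, yielding a fixed point $X_{c}:=W_{M}$ with $X_{c}=\cp(X_{c})\cap\hat{\mathcal{O}}_{s}$. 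The transition relation $G$ returned by the algorithm assigns to each $\hat{q}\in X_{c}$ precisely those inputs $u\in U_{2}$ for which $\Delta_{2}(\hat{q},u)\subseteq X_{c}$, as realized by Algorithm \ref{alg-2}.

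With $X_{c}$ and $G$ in hand, I would then check both conditions of Definition \ref{def-8}. Condition (i) is immediate: $\dom(\mathbb{C}_{1})=X_{c}\subseteq\hat{\mathcal{O}}_{s}$, and the output map of $T_{\tau,\eta,\mu}(\Sigma)$ is the natural inclusion, so every state in $\dom(\mathbb{C}_{1})$ has output in $\hat{\mathcal{O}}_{s}$. Condition (ii) follows from the fixed-point identity together with the definition of $G$: for every $\hat{q}\in X_{c}$ and every $u\in\mathbb{C}_{1}(\hat{q})$, one has $\Delta_{2}(\hat{q},u)\subseteq X_{c}=\dom(\mathbb{C}_{1})$. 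Moreover, $X_{c}\subseteq\cp(X_{c})$ guarantees $\mathbb{C}_{1}(\hat{q})\neq\varnothing$ for every $\hat{q}\in X_{c}$, so no state of the controlled system is blocking. Lemma \ref{lem-1} then yields that $\mathbb{C}_{1}$ is a safety controller for $T_{\tau,\eta,\mu}(\Sigma)$ with specification $\hat{\mathcal{O}}_{s}$.

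The second assertion follows essentially verbatim from Theorem \ref{thm-2}: since $\hat{\mathcal{O}}_{s}$ is, by the construction stated before \eqref{eqn-7} and the reasoning of Proposition \ref{prop-1}, an abstract specification associated with $\mathcal{O}_{s}$, and since $\mathbb{C}_{1}$ has just been shown to be a safety controller for $T_{\tau,\eta,\mu}(\Sigma)$ relative to this abstract specification, the refined controller $\mathbb{C}(x)=\mathbb{C}_{1}(\mathcal{F}(x))$ is a safety controller for $T_{\tau}(\Sigma)$ with specification $\mathcal{O}_{s}$. The main obstacle I expect is the first bookkeeping step, namely rigorously matching the pseudocode of Algorithm \ref{alg-1}—with its inner \emph{Explore} loop and auxiliary sets $\Upsilon$ and $\Upsilon_{1}$—to the clean recursion in \eqref{eqn-7}, so that the identification $X_{c}=W_{M}$ is unambiguous; once that correspondence is established, the safety verification reduces to a direct application of Lemma \ref{lem-1} followed by Theorem \ref{thm-2}.
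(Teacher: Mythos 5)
Your proposal is correct and follows essentially the same route as the paper, which states Theorem \ref{thm-3} without a separate proof precisely because the argument is the preceding construction: identify the output of Algorithm \ref{alg-1} with the fixed point of \eqref{eqn-7} (so $X_{c}=W_{M}$ and $G$ given by $\Delta_{2}(\hat{q},u)\subseteq X_{c}$), verify Definition \ref{def-8} via Lemma \ref{lem-1} using $X_{c}\subseteq\cp(X_{c})\cap\hat{\mathcal{O}}_{s}$, and refine through Theorem \ref{thm-2} with $\hat{\mathcal{O}}_{s}$ serving as the abstract specification (Proposition \ref{prop-1}). One small remark: equating the algorithm's update $\cp(\Upsilon)\cap\Upsilon$ with $\cp(W_{i})\cap\hat{\mathcal{O}}_{s}$ requires monotonicity of $\cp$ together with the decreasing property of $\{W_{i}\}$, not merely the invariant $W_{i}\subseteq\hat{\mathcal{O}}_{s}$, although for the safety conclusion only the terminal fixed-point inclusion $X_{c}\subseteq\cp(X_{c})\cap\hat{\mathcal{O}}_{s}$ is actually needed.
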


\begin{figure}[!t]
\begin{center}
\begin{picture}(80, 85)
\put(-40,-15){\resizebox{50mm}{35mm}{\includegraphics[width=2.5in]{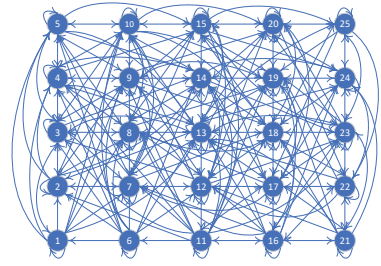}}}
\end{picture}
\end{center}
\caption{Symbolic model $T_{0.2, 0.2}(\Sigma)$ for the control system $\Sigma$. The abstract state $(z_{i}, z_{j})$ in $T_{0.2, 0.2}(\Sigma)$ with $i, j\in\{-2, -1, 0, 1, 2\}$ corresponds to
the state $5(i+2)+j+3$ in this figure.}
\label{fig-2}
\end{figure}

\section{Illustrative Example}
\label{sec-example}

As a simple mechanical control system studied in the literature \cite{Pola2008approximately}, the pendulum is described as
\begin{align*}
\Sigma:  \dot{x}_{1}&=x_{2}, \quad \dot{x}_{2}=-gl^{-1}\sin(x_{1})-km^{-1}x_{2}+u,
\end{align*}
where $x_{1}$ and $x_{2}$ are respectively the angular position and velocity of the point mass, $u$ is the torque which can be treated as the control variable. In addition, $g = 9.8$ is the gravity acceleration, $l=5$ is the length of the rod, $m =0.5$ is the mass, and $k=3$ is the coefficient of friction. Assume that the control input $u$ is piecewise-constant and bounded in the set $U=[-2.5, 2.5]$. For simplicity the state set is bounded in the set $X=[-1, 1]\times[-1, 1]$.

To construct the abstraction, the applied quantizer is
\begin{align*}
Q(z):=\left\{\begin{aligned}
&\frac{(1+\eta)^{k+1}a}{(1-\eta)^{k}},  & & \frac{(1+\eta)^{k}a}{(1-\eta)^{k}}<z\leq\frac{(1+\eta)^{k+1}a}{(1-\eta)^{k+1}}; \\
&0, & & 0\leq z\leq a;  \\
& -Q(-z), & & z<0.
\end{aligned}\right.
\end{align*}
Let $\eta=0.2$ and $a=0.4$, and thus there are 25 quantization regions for the logarithmic quantizer. Comparing with the uniform quantizer applied in \cite{Pola2008approximately}, the quantization regions for the logarithmic quantizer are not the same with equivalent size. By adjusting the parameters $\eta$ and $a$, we can change the precision of the logarithmic quantizer, whereas the precision of the uniform quantizer depends on the precision of the approximate bisimulation; see also \cite{Girard2010approximately, Pola2008approximately}.

\begin{figure}[!t]
\begin{center}
\begin{picture}(80, 80)
\put(-40,-15){\resizebox{60mm}{36mm}{\includegraphics[width=2.5in]{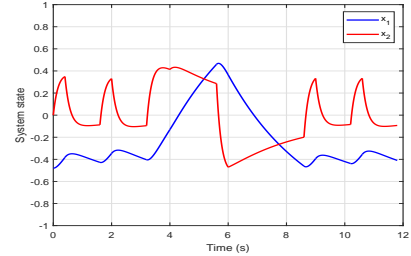}}}
\end{picture}
\end{center}
\caption{Trajectory of the control system $\Sigma$ with initial condition $(-0.48, 0)$ and control strategy synthesized on $T_{0.2, 0.2}(\Sigma)$.}
\label{fig-3}
\end{figure}

\begin{figure}[!t]
\begin{center}
\begin{picture}(80, 90)
\put(-40,-15){\resizebox{60mm}{36mm}{\includegraphics[width=2.5in]{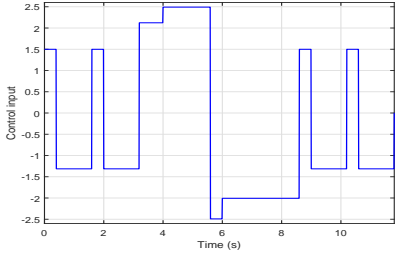}}}
\end{picture}
\end{center}
\caption{Control strategy synthesized on $T_{0.2, 0.2, 2\times10^{-3}}(\Sigma)$.}
\label{fig-4}
\end{figure}

Let $\tau=0.2$ and $\mu=2\times10^{-3}$. In addition, the Lipschitz constant for $\Sigma$ is 6. The symbolic model $T_{0.2, 0.2, 2\times10^{-3}}(\Sigma)=(X_{2}, X^{0}_{2}, U_{2}, \Delta_{2}, Y_{2})$ is given by: (i) $X_{2}$ is the union of the quantization regions partitioned via the logarithmic quantizer $Q$; (ii) $X^{0}_{2}=X_{2}$; (iii) $U_{2}=\bigcup_{q\in [\mathbb{R}^{n}]_{2\times10^{-3}}}U_{2}(q)$; (iv) the transition relation $\Delta_{2}$ is depicted in Fig. \ref{fig-2}; (v) $Y_{2}=X_{2}$. The transition system $T_{0.2, 0.2, 2\times10^{-3}}(\Sigma)$ is shown in Fig. \ref{fig-2}, where the transition relation is obtained via \eqref{eqn-6} and the numerical integration of the trajectories of $\Sigma$. Comparing with the uniform quantization based abstraction in \cite{Pola2008approximately}, there are more (loop) transitions in $T_{0.2, 0.2, 2\times10^{-3}}(\Sigma)$ emanating from the abstract states, which implies that some complexity issues can be avoided; see \cite{Reissig2014feedback}.

In the following, the controller synthesis is illustrated via the symbolic model $T_{0.2, 0.2, 2\times10^{-3}}(\Sigma)$. Assume that the objective is to design a controller to enforce an alternation between two different periodic motions, which are respectively denoted as $\mathcal{S}_{1}$ and $\mathcal{S}_{2}$. The periodic motion $\mathcal{S}_{1}$ requires the state of $\Sigma$ to cycle between $(-0.48, 0)$ and $(0, 0)$, whereas the periodic motion $\mathcal{S}_{2}$ requires the state to cycle between $(-0.48, 0)$ and $(0.48, 0)$. Thus, the control aim is to design a controller such that the system $\Sigma$ satisfies a specification $\mathcal{S}$ requiring the execution of the sequence of periodic motions $\mathcal{S}_{1}, \mathcal{S}_{1}, \mathcal{S}_{2}, \mathcal{S}_{1}, \mathcal{S}_{1}$.

A control strategy for periodic motions $\mathcal{S}_{1}$ and $\mathcal{S}_{2}$ can be obtained by performing a search on $T_{0.2, 0.2, 2\times10^{-3}}(\Sigma)$ using standard methods in supervisory control \cite{Ramadge1987supervisory}. A possible solution for $\mathcal{S}_{1}$ is given by $(-0.48, 0)\overset{1.4991}{\longrightarrow}(0, 0)\overset{-1.2127}{\longrightarrow}(-0.48, 0)$, and a solution for $\mathcal{S}_{2}$ is given by $(-0.48, 0)\overset{2.1230}{\longrightarrow}(0, 0.48)\overset{2.4914}{\longrightarrow}(0.48, 0)\overset{-2.4914}{\longrightarrow}(0, -0.48)\overset{-2.0074}{\longrightarrow}(-0.48, 0)$. Based on such two solutions, a control strategy for $\mathcal{S}$ is derived by combining the trajectories associated with the motions $\mathcal{S}_{1}, \mathcal{S}_{1}, \mathcal{S}_{2}, \mathcal{S}_{1}$ and $\mathcal{S}_{1}$. As a result, we have the following transitions: $(-0.48, 0)\overset{1.4991}{\longrightarrow}(0, 0)\overset{-1.2127}{\longrightarrow}(-0.48, 0)\overset{1.4991}{\longrightarrow}(0, 0)\overset{-1.2127}{\longrightarrow}(-0.48, 0)\overset{2.1230}{\longrightarrow}(0, 0.48)\overset{2.4914}{\longrightarrow}(0.48, 0)\overset{-2.4914}{\longrightarrow}(0, -0.48)\overset{-2.0074}{\longrightarrow}(-0.48, 0)\overset{1.4991}{\longrightarrow}(0, 0)\overset{-1.2127}{\longrightarrow}(-0.48, 0)$. Note that some transitions are not obtained by one sampling period. This means that the abstract state may stay the same after certain transitions, which results from the symbolic abstraction via the logarithmic quantization; see also the loop transitions in Fig. \ref{fig-2}. The control strategy is presented in Fig. \ref{fig-4}. With such control strategy, the evolution of the system state is shown in Fig. \ref{fig-3}. The completion time of the specification $\mathcal{S}$ is 11.8s, whereas the completion time in \cite{Pola2008approximately} is 24s, which implies that the computation time is reduced significantly.

\section{Conclusion}
\label{sec-conclusion}

In this paper, we applied logarithmic quantization to construct the symbolic abstraction for nonlinear control systems. Based on the constructed discrete abstraction, the controller synthesis problem was studied via abstract specification, and a novel algorithm was proposed to compute the safety controller. Finally, a numerical example was provided to illustrate the obtained results. Future researches will be directed to symbolic abstractions via dynamic quantizers.


\end{document}